\newtheorem{theorem}{Theorem}
\def\BibTeX{{\rm B\kern-.05em{\sc i\kern-.025em b}\kern-.08em
    T\kern-.1667em\lower.7ex\hbox{E}\kern-.125emX}}
\begin{document}

\title{Extra DoF of Near-Field Holographic MIMO Communications Leveraging Evanescent Waves}

\author{Ran Ji, Shuo Chen, Chongwen~Huang, Wei E. I. Sha,~\IEEEmembership{Senior,~IEEE},  Zhaoyang Zhang, Chau~Yuen,~\IEEEmembership{Fellow,~IEEE}, \\
and M\'{e}rouane~Debbah,~\IEEEmembership{Fellow,~IEEE},

\thanks{R. J, S. Chen, C. Huang, W. E. I. Sha, Z. Yang, are with the Zhejiang University, China; C. Yuen is with the Singapore University of Technology and Design, Singapore; M. Debbah is M. Debbah is  with the Technology Innovation Institute,  Abu Dhabi, United Arab Emirates.}
}

\maketitle
\begin{abstract}
In this letter, we consider transceivers with spatially-constrained antenna apertures of rectangular symmetry, and aim to improve of spatial degrees of freedom (DoF) and channel capacity leveraging evanescent waves for information transmission in near-field scenarios based on the Fourier plane-wave series expansion. The treatment is limited to an isotropic scattering environment but can be extended to the non-isotropic case through the linear-system theoretic interpretation of plane-wave propagation. Numerical results show that evanescent waves have the significant potential to provide additional DoF and capacity in the near-field region.
\end{abstract}

\begin{IEEEkeywords}
Spatial degrees of freedom, Channel capacity, Evanescent wave, Physics-based channel modeling
\end{IEEEkeywords}

\section{INTRODUCTION}
In MIMO (multiple-input multiple-output) systems, channel capacity grows linearly with the number of spatial degrees of freedom (DoF), which is relative to both the scattering environment and antenna array geometries. Recently, the concept of \textit{holographic MIMO} has drawn increasing attention and is regarded as one of the possible technologies used in 6G communications\cite{b1}. A holographic MIMO array consists of a massive (possible infinite) number of antennas integrating into a compact space, which can be seen as a ultimate form of spatially-constrained MIMO system and be modeled as a spatially-continuous electromagnetic (EM) apertures with asymptotically infinite antennas. Thus, a fundamental question of holographic communications is to obtain the number of spatial DoF and capacity of a holographic MIMO system. To answer this question, many previous works have investigated the continuous-space channel modelling under different propagation conditions and array geometries, e.g.,\cite{b2}-\cite{b4}. However, these models are all based on the far-field communication assumption, which might be not applicable with the dramatic increase of antenna aperture and higher operation frequencies that makes near-field communications be the practical scenario. Thus, near-field modeling, the spatial DoF and its capacity boundary are needed. There have been some works trying to fill the gap from the Green's function and electromagnetic theory, such as \cite{b5}, \cite{b6} and \cite{b7}. In these works, the authors derive channel models and estimate near-field communication performances from Green's function, corresponding to a relatively simple communication scenario. However, to the best of the authors' knowledge, near-field analysis based on channel model corresponding to complex scattering environment have not been investigated.

In this letter, we consider transceivers with spatially-constrained antenna apertures of rectangular symmetry based on spatially-stationary random monochromatic scattering propagation environment. Unlike work \cite{b8} that a Fourier orthonormal expansion was derived to compute the DoF limit in the far-field region, we focus on the near-field communication scenario and compute the DoF and capacity gain by considering evanescent waves. To the best of our knowledge, this is the initial work to analyze the DoF and capacity limit of the near-field communications under a generalized stochastic spatially-stationary monochromatic channel. Moreover, we also demonstrate our proposition through numerical simulations and verify its coherence with traditional results, i.e., the maximum DoF and capacity of the far-field region that are provided in \cite{b8} and \cite{b9}. Specifically, we take full potential of evanescent waves in near-field holographic communications, and drive its extra DoF and capacity based on Fourier plane-wave series expansion. Numerical results show that the DoF and capacity of near-field communications can be improved by 30\% respectively comparing with the traditional far-field scenario.

\section{SYSTEM MODEL}
We study a holographic communication system where the transmitter and the receiver are equipped with planar holographic MIMO surface. Considering the electromagnetic waves propagation in every direction (i.e., isotropic propagation) through a homogeneous, isotropic, and infinite random scattered medium, the 3D small-scale fading channel can be modeled as a space-frequency scalar random field \cite{b9},\cite{b10}, which is given as follows:
\begin{equation}
    {h_\omega(x,y,z):(x,y,z)\in\mathbb{R}^3,\omega\in(-\infty,\infty)}
\end{equation}
which is a function of frequency and spatial position $(x,y,z)$. Since we only consider monochromatic waves in this letter, the variant $\omega$ can thus be omitted. According to \cite{b10}, $h(x,y,z)$ can be modeled as a zero-mean, spatially-stationary and Gaussian random field.
This channel can be completely described according to \cite{b10}:
\begin{equation}
    c_h(x,y,z)=\mathbb{E} \{h^*(x',y',z')h(x+x',y+y',z+z')\}
\end{equation}
\begin{equation}
    S_h(k_x,k_y,k_z)=\iiint c_h(x,y,z)e^{-j(k_xx+k_yy+k_zz)}dxdy
\end{equation}
where $c_h(x,y,z)$ is the spatial auto-correlation function in spatial domain and $S_h(k_x,k_y,k_z)$ is the power spectral density in the wave-number domain.

\subsection{Fourier Plane-Wave Spectral Representation}
In a source-free environment, the EM nature of the small-scale fading requires realizations of $h(x,y,z)$ to satisfy the scalar Helmholtz equation in the frequency domain, which can be seen as a constraint of $h(x,y,z)$:$(\nabla^2+\kappa^2)h(x,y,z)=0$ where $\kappa=\frac{2\pi}{\lambda}$ is the module of the vector wave-number and $\lambda$ is the wavelength.
Following the derivation in \cite{b10}\cite{b11}, we can find out that the channel's power spectral density in the wave-number domain has the following form under the Helmholtz equation constraint:
\begin{equation}
    S_h(k_x,k_y,k_z)=\frac{4\pi^2}{\kappa}\delta(k_x^2+k^2_y+k^2_z-\kappa^2)
    \label{3Dpowerspectral}
\end{equation}
which is an impulsive function with wave-number support on the surface of a sphere of radius $\kappa$. $h(x,y,z)$ is decomposed into the sum of two random fields:
\begin{equation}
    h(x,y,z)=h_+(x,y,z)+h_-(x,y,z)
\end{equation}
which is further defined as the \textit{Fourier plane-wave spectral representation:}
\begin{equation}
\begin{aligned}
    h_\pm(x,y,z)&=\frac{1}{4\pi\sqrt{\pi}}\iint \sqrt{S_h(k_x,k_y)}W^\pm(k_x,k_y) \\
    &\times e^{j(k_xx+k_yy\pm\gamma(\kappa_x,\kappa_y)z)}
    \label{fourier-basedrepresentation}
    \end{aligned}
\end{equation}
where $W^+(k_x,k_y)$ and $W^-(k_x,k_y)$ are two 2D independent, zero-mean, complex-valued, white-noise Gaussian random fields and $S_h(k_x,k_y)$ is the 2D power spectral density of $h\pm(x,y,z=0)$. Conventionally, $S_h(k_x,k_y)$ is defined over a compact support $k_x^2+k^2_y\leq\kappa^2$ given by a disk of radius $\kappa$ centered on the origin (excluding a purely imaginary $\gamma(k_x,k_y)=\sqrt{\kappa^2-k^2_x-k^2_y}$ that is called as the evanescent waves because they do not contribute to far-field propagation). This limits the bandwidth of conventional $h(x,y,z)$ (in the wave-number domain) to $\pi\kappa^2$. However, as we will demonstrate in Section \uppercase\expandafter{\romannumeral4}, the far-field communication assumption might be not applicable in some short range communications scenarios because of the increase of antenna aperture and operating frequencies in the next generation communications. As a consequence, it is possible to leverage evanescent EM waves to transmit information in near-field communications.

\subsection{4D Fourier Plane-Wave Series Expansion}

In \cite{b9}, authors propose 4D Fourier plane-wave representation and 4D Fourier plane-wave series expansion of EM channels. Compared with the 2D representation, the 4D version is derived from Helmholtz equation and Green's function as well. Specifically, 4D Fourier plane-wave representation can be written as
\begin{equation}
\begin{aligned}
    h(\textbf{r},\textbf{s}) = &\frac{1}{(2\pi)^2}\iiiint a_r(\textbf{k},\textbf{r})H_a(k_x,k_y,\kappa_x,\kappa_y)a_s(\boldsymbol{\kappa},\textbf{s})\\
    &\times dk_xdk_yd\kappa_xd\kappa_y
    \label{4D1}
\end{aligned}
\end{equation}
where $a_r(\textbf{k},\textbf{r})H_a(k_x,k_y,\kappa_x,\kappa_y)=e^{j\textbf{k}^T\textbf{r}}$, $a_s(\boldsymbol{\kappa},\textbf{s})=e^{-j\boldsymbol{\kappa}^T\textbf{s}}$ respectively. (\ref{4D1}) can be approximated by 4D Fourier plane-wave series expansion:
\begin{equation}
    h(\textbf{r},\textbf{s}) \approx \sum_{}\sum_{}a_r(l_x,l_y,\textbf{r})H_a(l_x,l_y,m_x,m_y)a_s(l_x,l_y,\textbf{s})
\end{equation}
where $a_r(l_x,l_y,\textbf{r}) = e^{j(\frac{2\pi}{L_{R,x}}l_xr_x+\frac{2\pi}{L_{R,y}}l_yr_y+\gamma_r(l_x,l_y)r_z)}$ and $a_s(m_x,m_y,\textbf{s}) = e^{-j(\frac{2\pi}{L_{S,x}}m_xs_x+\frac{2\pi}{L_{S,y}}m_ys_y+\gamma_r(m_x,m_y)s_z)}$.

The channel matrix of Holographic MIMO communications can be written as
\begin{equation}
    \textbf{H} = \boldsymbol{\Phi}_r\Tilde{\textbf{H}}\boldsymbol{\Phi}_s^H = \boldsymbol{\Phi}_r e^{j\boldsymbol{\Gamma}_r} \textbf{H}_a e^{-j\boldsymbol{\Gamma}_s} \boldsymbol{\Phi}_s^H
\end{equation}
where $\boldsymbol{\Gamma}_r = diag(\gamma_r)r_z$ and $\boldsymbol{\Gamma}_s = diag(\gamma_s)s_z$. $\boldsymbol{\Phi}_r$ and $\boldsymbol{\Phi}_s$ are 2D spatial-frequency Fourier harmonics matrix. In the far-field communication scenario, the angular matrix $\textbf{H}_a$ is $\textit{semi-unitarily equivalent}$ to $\textbf{H}$ and statistically equivalent to $\Tilde{\textbf{H}}$.

%

\section{DOF AND CAPACITY OF NEAR-FIELD HOLOGRAPHIC MIMO COMMUNICATIONS}
In electromagnetics, an evanescent field, or evanescent wave, is an oscillating electric and/or magnetic field that does not propagate as an EM wave but whose energy is spatially concentrated in the vicinity of the source. As we can see that the DoF of holographic MIMO channels in far-field communication scenarios has been computed in \cite{b8}, but evanescent waves are not considered since they attenuate exponentially with the distance. When we go to beyond 5G or 6G communications with holographic MIMO  surfaces under higher operating frequencies i.e., millimeter wave or terahertz, the original far-field communication range will become in the near-field range according to the Raleigh distance $\frac{2D^2}{\lambda}$, where $D$ is the maximum linear dimension of the antenna, and $\lambda$ is the wavelength of the EM waves. One typical scenario is that the whole wall of an indoor room is equipped with holographic MIMO surfaces as the transmitter using sufficiently high carrier frequencies for communications, which might make the whole indoor environment fall in  transmitter's near-field region. For example, using a carrier frequency $f=5$ GHz (i.e., $\lambda=6$cm) and holographic MIMO surface aperture length $L_x=L_y=1$m, this roughly provide a near-field range of  $2D^2/\lambda\approx33.3$m.
\par Using the holographic MIMO surfaces to control the EM wave, it usually makes the reflected wave off the surface at an angle greater than the critical angle where evanescent waves are formed. This only happens in the near field since  the intensity of evanescent waves decays exponentially with the distance from the interface at which they are formed. Therefore, it is natural to consider of using evanescent wave to carry information besides the normal plane waves that can be applied for information transmission in the original far-field range. In this paper, we compute the limit of the average number of spatial DoF and capacity with evanescent waves. Furthermore, we will show that evanescent waves can bring the considerable DoF and capacity gain in reactive near field region from both theoretical and numerical perspectives.

\subsection{DEGREES OF FREEDOM}
\subsubsection{Far-field Communications}\label{AA}
It is well known that a band limited orthonormal series expansion has a countably-finite number of coefficients, whose cardinality determines the space dimension \cite{b8}, i.e., the available DoF. In the following, we take planar arrays and volumetric arrays as examples to derive their DoFs.

\begin{itemize}
    \item Planar arrays
\end{itemize}
\par Assuming $h(x,y,z)$ is observed over a 2D rectangle, $\mathcal{V}_2$ of side lengths $L_1 > L_2$. According to 2D Fourier Plane-Wave Series Expansion, $h(x,y) = h(x,y,0)$ is given
\begin{equation}
h(x,y) \approx \mathop{\sum\sum}\limits_{(l,m) \in \mathcal{E}}c_{l,m}\varphi_{l,m}(x,y) \quad (x,y) \in \mathcal{V}_2\label{dof_fourier_2Dexpansion}
\end{equation}
where $\varphi_{l,m}(x,y) = \varphi_l(x)\varphi_m(y)$ is the 2D Fourier basis, and $c_{l,m} = \sqrt{L_xL_y}H_{lm}(0)$. The average number of DoF is limited by the cardinality of $\{H_{lm}\}$, which is the number of lattice points falling into the 2D \textbf{inner} lattice ellipse shown in Fig.~\ref{ellipse}, that is its Lebesgue measure of $\mathcal{E}$. This yields
\begin{equation}
    \eta_2=\frac{\pi}{\lambda^2}L_xL_y\label{dof_2D}
\end{equation}

\begin{itemize}
    \item Volumetric arrays
\end{itemize}
\par Assuming $h(x,y,z)$ is observed over a 3D parallelepiped, $\mathcal{V}_3$ of side lengths $L_x,L_y$ and $L_z < \min(L_x,L_y)$. The 2D Fourier plane-wave series expansion of $h(x,y,z)$ for a fixed $z$ is
\begin{equation}
h(x,y) \approx \mathop{\sum\sum}\limits_{(l,m) \in \mathcal{E}}c_{l,m}(z)\varphi_{l,m}(x,y) \quad (x,y), \in \mathcal{V}_2\label{dof_fourier_3Dexpansion}
\end{equation}
where $c_{l,m} = \sqrt{L_xL_y}H_{lm}(z)$ with $H_{lm}(z) = H_{lm}^{+}e^{j\gamma_{lm}z}$ $+H_{lm}^{-}e^{-j\gamma{lm}z}$. Giving the pair $(l,m)$ and fixed $z$, these two functions are completely known and do not carry any information. Hence, we expect that the number of DoF over a 3D volume does not scale proportionally to $L_z$, as it happens for $L_x$ and $L_y$. By working with a single random vector, the 2D random vector field $\bold{h}(x,y) = h(x,y,\bold{z})$ is given by
\begin{equation}
\bold{h}(x,y) \approx \mathop{\sum\sum}\limits_{(l,m) \in \mathcal{E}} \bold{c}_{l,m} \varphi_{l,m}(x,y)
\end{equation}
where $\bold{c}_{lm}=\bold{\Gamma}_{lm}\bold{a}_{lm} \in \mathbb{C}^{N_z\times2}$ is a random vector of statistically-independent elements with
\begin{equation}
    \bold{\Gamma}_{lm} = [e^{j\gamma{lm}\bold{z}},e^{-j\gamma{lm}\bold{z}}] \in \mathbb{C}^{N_z\times2}
\end{equation}
and $\bold{a}_{lm} = \sqrt{L_xL_y}[H_{lm}^{+},H_{lm}^{-}]^{T} \in \mathbb{C}^{N_z\times2}$. The number of DoF are obtained by the product between (\ref{dof_2D}) and the rank of $\Gamma_{lm}$. The latter is 2 since the two columns of $\Gamma_{lm}$ are linearly independent regardless of the choice of $z$. Thus, the DoF is

\begin{equation}
    \eta_3=\frac{2\pi}{\lambda^2}L_xL_y
\end{equation}

\subsubsection{Near-field Communications}
\begin{figure}[htbp]
\centering
\includegraphics[width=0.40\textwidth]{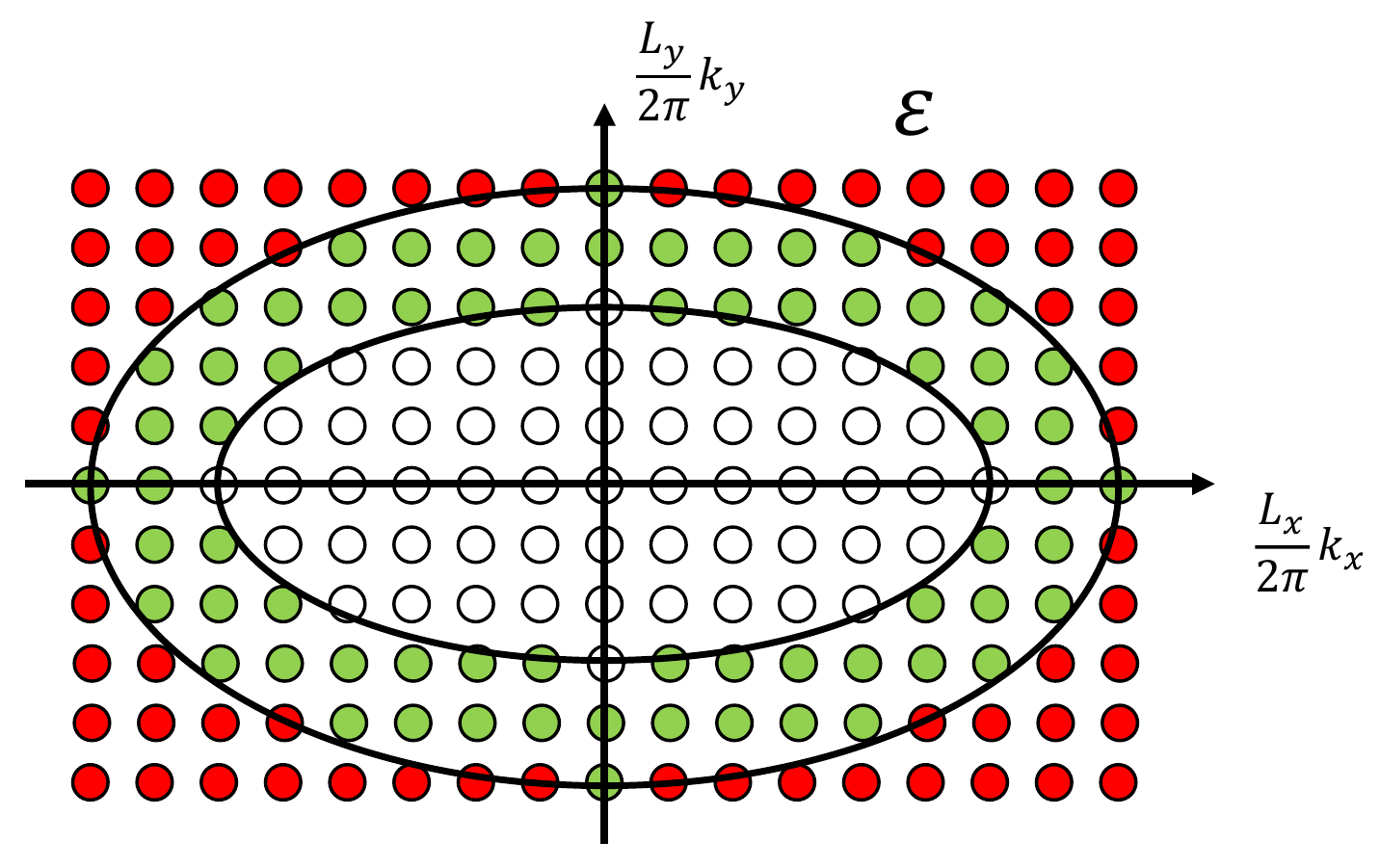}
\caption{The 2D lattice ellipse $\mathcal{E}$ wavenumber spectral support of $h(x,y,z)$}
\label{ellipse}
\end{figure}\vspace{-0mm}

We now consider evanescent waves for near-field communications and compute the number of DoF over volumetric($\mathcal{V}_3$) aperture spaces.

\newtheorem{definition}{Definition}
\begin{definition}
The amplitude of evanescent wave attenuates exponentially with propagation distance and its power can be written as
\begin{equation}
    \frac{\mathcal{P}_{receive}}{\mathcal{P}_{send}} = e^{-2k_zz}
\end{equation}
\textit{where $k_z$ is the modulus of imaginary wavenumber on z axis.}
\end{definition}

\begin{theorem}
\textit{The additional DoF can be obtained in the near-field communications that is generally given as follow, which vanishes as the distance between the transmitter and receiver planes increases.}
\begin{equation}
    \!\!DoF_{evanescent}\!=\!DoF_{far-field}*\frac{1}{(4z\pi)^2}*\lambda^2*\ln^2{\frac{\mathcal{P}_{send}}{\mathcal{P}_{noise}}}
    \label{theorem}
\end{equation}
\end{theorem}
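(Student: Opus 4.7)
The plan is to mimic the far-field DoF derivation based on counting lattice points in a wavenumber support region, but extended to the evanescent regime where $\gamma(k_x,k_y)$ is purely imaginary. The key observation is that the far-field support $k_x^2+k_y^2\le\kappa^2$ was justified by discarding modes with exponential decay; once we allow a finite propagation distance $z$ and a finite noise floor $\mathcal{P}_{noise}$, only sufficiently ``strong'' evanescent modes need to be discarded. The remaining ones contribute extra DoF, and I expect to recover the stated formula by computing the Lebesgue measure of the corresponding extra annulus in the wavenumber plane.

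Concretely, I would proceed as follows. First, for $k_x^2+k_y^2>\kappa^2$, write $\gamma(k_x,k_y)=jk_z$ with $k_z=\sqrt{k_x^2+k_y^2-\kappa^2}\ge 0$, so that the corresponding Fourier plane-wave basis element in (\ref{fourier-basedrepresentation}) becomes $e^{j(k_xx+k_yy)}e^{-k_zz}$. By Definition~1, the received-to-transmitted power ratio of such a mode at distance $z$ is $e^{-2k_zz}$. Second, an evanescent mode can carry information only if its received power exceeds the noise floor, i.e.\ $\mathcal{P}_{send}\,e^{-2k_zz}\ge \mathcal{P}_{noise}$, which is equivalent to
\begin{equation}
k_z\;\le\;k_{z,\max}\;:=\;\frac{1}{2z}\ln\frac{\mathcal{P}_{send}}{\mathcal{P}_{noise}}.
\end{equation}
Third, this expands the effective wavenumber support of $h(x,y,z)$ from the disk $k_x^2+k_y^2\le\kappa^2$ to the enlarged disk $k_x^2+k_y^2\le\kappa^2+k_{z,\max}^2$, adding an annulus of area $\pi k_{z,\max}^2$.

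Fourth, following the same lattice-point argument that produced (\ref{dof_2D}) from the inner ellipse $\mathcal{E}$ of Fig.~\ref{ellipse}, the average number of DoF contributed by the evanescent annulus equals its Lebesgue measure rescaled by the aperture factor $L_xL_y/(2\pi)^2$:
\begin{equation}
DoF_{evanescent}\;=\;\frac{L_xL_y}{(2\pi)^2}\,\pi k_{z,\max}^2\;=\;\frac{L_xL_y}{16\pi z^2}\ln^2\frac{\mathcal{P}_{send}}{\mathcal{P}_{noise}}.
\end{equation}
Dividing by $DoF_{far\text{-}field}=\pi L_xL_y/\lambda^2$ rearranges exactly into (\ref{theorem}). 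The volumetric case is handled identically: the rank-$2$ argument leading to $\eta_3=2\pi L_xL_y/\lambda^2$ relies only on the two columns of $\boldsymbol{\Gamma}_{lm}$ being linearly independent, which still holds on the annulus.

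The main obstacle is not the geometry but the cut-off justification. One must argue that the noise-based threshold $k_z\le k_{z,\max}$ is the right way to truncate a continuum of decaying modes into a countable DoF count; I would defend this by appealing to the standard ``information-bearing subspace'' interpretation, namely that modes attenuated below the noise floor cannot be resolved and therefore do not contribute to the effective rank of $\mathbf{H}$. A secondary subtlety is that the argument implicitly assumes $k_{z,\max}\ll\kappa$ fails, i.e.\ we are genuinely in the near field; the theorem's concluding remark that the extra DoF vanish as $z\to\infty$ follows immediately from $k_{z,\max}\to 0$, providing a sanity check against the far-field limit of \cite{b8}.
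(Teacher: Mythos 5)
Your proposal is correct and follows essentially the same route as the paper: a noise-floor cutoff $e^{-2k_zz}=\mathcal{P}_{noise}/\mathcal{P}_{send}$ yielding $k_{z,\max}=\frac{1}{2z}\ln\frac{\mathcal{P}_{send}}{\mathcal{P}_{noise}}$, followed by scaling the far-field DoF by the area ratio of the evanescent annulus to the propagating disk, $k_{z,\max}^2/\kappa^2$. Your version merely makes the Lebesgue-measure normalization $L_xL_y/(2\pi)^2$ explicit and sanity-checks it against $\eta_2$, whereas the paper multiplies $DoF_{far\text{-}field}$ by $(t^2/\kappa^2-1)$ directly; the two computations are identical.
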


\begin{proof}
We assume that the impact of evanescent waves is only considered when the received signal power is larger than that of the channel noise. From (16), we can obtain that the largest $k_z$ satisfies the following equation:
\begin{equation*}
    \frac{P_{noise}}{P_{send}} = e^{-2k_zz}
\end{equation*}
then we can calculate $k_z$ as
\begin{equation*}
     2k_zz = ln(\frac{P_{send}}{P_{noise}})
\end{equation*}
\begin{equation}
     k_z = \frac{1}{2z}*ln(\frac{P_{send}}{P_{noise}})
     \label{kz}
\end{equation}
When the EM wave is evanescent, its wavenumbers on three axes satisfy the following equation:
\begin{equation*}
    \kappa^2 = \kappa_x^2 + \kappa_y^2 - \kappa_z^2
\end{equation*}

Assuming the outer ellipse in Fig.~\ref{ellipse} is defined by
\begin{equation*}
    t^2 = \kappa_x^2 + \kappa_y^2
\end{equation*}

Then the additional DoF can be calculated by
\begin{equation*}
    DoF_{evanescent}=DoF_{far-field}*(\frac{{t}^2}{\kappa^2}-1)
\end{equation*}
\begin{equation}
\begin{aligned}
    DoF_{evanescent}&=DoF_{far-field}*(\frac{{\kappa}^2+\kappa_z^2}{\kappa^2}-1)\\
    &=DoF_{far-field}*\frac{{k_z}^2}{\kappa^2}\\
\end{aligned}
\label{DoFimprove}
\end{equation}
Substituting the corresponding $k_z$ in (\ref{kz}) into equation (\ref{DoFimprove}) with $\kappa=2\pi/\lambda $, it gives us the result (\ref{theorem}).
\end{proof}

\subsection{CHANNEL CAPACITY EVALUATION}
In this section, we analyze channel capacity by  considering evanescent waves. Channel capacity in far-field communication scenarios can be given by \cite{b9}:
\begin{equation}
     C = \rm \mathop{\max}_{ \textbf{Q}_a:tr(\textbf{Q}_a)<=1} \mathbb{E}\{log_2det( \textbf{I}_{n_r}+snr \textbf{H}_a \textbf{Q}_a \textbf{H}^H_a)\}
     \label{traditional_capacity}
\end{equation}
where $\textbf{Q}_a = \mathbb{E}\{\textbf{x}_a^{\rm H} \textbf{x}_a \}$, and $\textbf{I}_{n_r}$ is an identity matrix of dimension $n_r$. $\textbf{H}_a \in \mathbb{C}^{n_r \times n_s}$ is the angular response matrix describing the channel coupling between source and receive propagation directions, assuming that there are $n_s$ and $n_r$ sampling points on the transmitting and receiving side, respectively. However, Eq. (\ref{traditional_capacity}) is based on the assumption that $\bm{\widetilde{{\rm H}}}$ is statistically equivalent with $\bm{{\rm H}}_a$, which does not hold when considering evanescent waves. Therefore, we will replace $\bm{{\rm H}}_a$ with $\bm{\widetilde{{\rm H}}}$ to derive the channel capacity in near-field communications.

As shown in Fig.~\ref{ellipse}, when considering evanescent waves, the angular response matrix can be expanded to include wave-numbers out of 2D lattice ellipse. In other words, sampling points $n_s$ and $n_r$ are much larger. We assume that the channels are independent identically distributed (i.i.d), which means the variances of matrix $\textbf{H}_a$'s elements naturally decouple:
\begin{equation}
    \sigma^2(l_x,l_y,m_x,m_y) = \sigma^2_s(m_x,m_y)\sigma^2_r(l_x,l_y)
\end{equation}
where $\sigma^2_s(m_x,m_y)$ and $\sigma^2_r(l_x,l_y)$ account for the power transfer at source and receiver, respectively.

Therefore, the matrix consisting of $n_rn_s$ variances can be denoted by:
\begin{equation}
    \bm{\Sigma} = vec\{\bm{\Sigma}_r\} vec\{\bm{\Sigma}_s\}^{\rm H}
    \label{sigma_decomposition}
\end{equation}
where $\bm{\Sigma}_r$ and $\bm{\Sigma}_s$ are variance matrices at source and receiver respectively.

To clarify the impact of evanescent waves, we assume $\bm{\Sigma}_r$ = $\bm{\Sigma}_{rin}$ + $\bm{\Sigma}_{rout}$, where $\bm{\Sigma}_{rin}$ and  $\bm{\Sigma}_{rout}$ represent variance matrices of sampling points in and out of the traditional support ellipse respectively. As shown in Lemma 1 of work \cite{b9}, angular random matrix can be obtained as
\begin{equation}
    \bm{{\rm H_a}} = \bm{\Sigma}\odot\bm{{\rm W}}
\end{equation}
where $\textbf{W}$ is the matrix with i.i.d. circularly-symmetric, complex-Gaussian random entries.

Thus, the capacity of near-field communications with evanescent waves can be calculated by:
\begin{equation}
    C = \rm \mathop{\max}_{ \textbf{Q}_a:tr(\textbf{Q}_a)<=1} \mathbb{E}\{log_2det( \textbf{I}_{n_r}+snr \bm{\widetilde{{\rm H}}} \textbf{Q}_a \bm{\widetilde{{\rm H}}^H})\}
    \label{near_field_capacity}
\end{equation}
where $\bm{\widetilde{{\rm H}}} = \rm \textbf{e}^{j\bm{{\rm\Gamma_r}}} \bm{{\rm H_a}} \textbf{e}^{-j\bm{{\rm\Gamma_s}}}.$

If the sampling points are in the traditional 2D ellipse region, $\gamma_r$ and $\gamma_s$ are real and $\bm{\widetilde{{\rm H}}}$ is statistically equivalent to $\bm{{\rm H_a}}.$ However, when the sampling points are outside the 2D ellipse region (i.e., evanescent waves are used for transmissions), $\gamma_r$ and $\gamma_s$ are imaginary and the signal power decays exponentially with the transmission distance (i.e.,  $\bm{\widetilde{{\rm H}}}$ is not statistically equivalent to $\bm{{\rm H_a}})$.

For simplicity, we assume: (1) instantaneous channel state information is only available at the receiver; (2) communications happen only  in interior 2D ellipse region or exterior 2D ellipse region (i.e., there is no cross region communications). Under assumption 1, the ergodic capacity in (\ref{near_field_capacity}) is achieved by an i.i.d input vector $\bm{x}_a$ with $\bm{{\rm Q}}_a = \frac{1}{n_s}\bm{{\rm I}}_{n_s}$ and channel capacity is given by
\begin{equation}
    C =  \sum_{i=1}^{{\rm rank}(\widetilde{\boldsymbol{{\rm H}}})}\mathbb{E}\{ {\rm log_2}(1+\frac{{\rm snr}}{n_s}\lambda_i(\widetilde{\boldsymbol{{\rm H}}}\widetilde{\boldsymbol{{\rm H}}}^{\rm H})) \}
    \label{near_field_capacity_average_power}
\end{equation}
where $\{\lambda_i(\bm{{\rm A}})\}$ are the eigenvalues of an arbitrary $\bm{{\rm A}}$.
Assuming the transmitting array is on the $x-y$ plane (i.e., $s_z$ = 0), then (\ref{near_field_capacity_average_power}) can be decomposed into:
\begin{equation}
    C = \sum_{i=1}^{{\rm rank}(\widetilde{\boldsymbol{{\rm H}}})}\mathbb{E}\{{\rm log_2}[1+\frac{{\rm snr}}{n_s}\lambda_i(e^{j\boldsymbol{\Gamma_r}}\boldsymbol{\Sigma}\odot \boldsymbol{{\rm W}}  {(e^{j\boldsymbol{\Gamma_r}}\boldsymbol{\Sigma}\odot \boldsymbol{{\rm W}} )}^{\rm H})] \}
    \label{capacity_decomposition}
\end{equation}
Applying (\ref{sigma_decomposition}) and $\bm{\Sigma}_r$ = $\bm{\Sigma}_{rin}$ + $\bm{\Sigma}_{rout}$, we can obtain:
\begin{equation*}
    \boldsymbol{\Sigma} =  [vec(\boldsymbol{\Sigma}_{rin})+vec(\boldsymbol{\Sigma}_{rout})][{vec(\boldsymbol{\Sigma}_{sin})}^{\rm H}+{vec(\boldsymbol{\Sigma}_{sout})}^{\rm H}]
\end{equation*}

Under assumption 2, we can simplify the above equation into:
\begin{equation}
    \boldsymbol{\Sigma} = vec(\boldsymbol{\Sigma}_{rin}){vec(\boldsymbol{\Sigma}_{sin})}^{\rm H}+vec(\boldsymbol{\Sigma}_{rout}){vec(\boldsymbol{\Sigma}_{sout})}^{\rm H}
    \label{sigma_decomposition_2}
\end{equation}
The capacity simulation results in Section \ref{simulation} are obtained through plugging (\ref{sigma_decomposition_2}) into (\ref{capacity_decomposition}).

\section{SIMULATION RESULTS}
\label{simulation}
Numerical results are performed to validate our theoretical results for a rectangular space $\mathcal{V}_3$.
Initially, we set simulation parameters as Table.~\ref{parameter_set}, then the DoF improvement percentage of near-field communications compared to the conventional far-field communication is computed through (\ref{theorem}) with one variable varying each time.

\begin{table}[htbp]
\caption{SIMULATION PARAMETERS}
\begin{center}
\begin{tabular}{|c|c|}
\hline
\multicolumn{2}{|c|}{\textbf{Simulation Parameters}} \\
\hline
\textbf{Antenna linear dimension D} & \textbf{Transmission distance} \\
\hline
$D=0.5$m& $z=0.5$m $^{\mathrm{*}}$\\
\hline
\textbf{Operation frequency}& \textbf{Power ratio} \\
\hline
$f=3G$Hz $^{\mathrm{*}}$& $P_{send}/P_{rece}=125.56$dB $^{\mathrm{*}}$\\
\hline
\multicolumn{2}{l}{The parameters with$^{\mathrm{*}}$ will be changed in the simulation stage}
\end{tabular}
\label{tab1}
\end{center}
\label{parameter_set}
\end{table}

In Fig.~\ref{distance}, it is shown that as the transmission distance becomes larger, the additional DoF provided by evanescent waves decreases rapidly. When the receiver is in the corresponding far-field region of the setting, the  DoF improvement can be neglected, which validates that the extra DoF gain only be obtained in the near-field region.
In Fig.~\ref{powerratio}, it it shown that as transmission power increases, extra DoF gain increases rapidly at first but converges to a constant mainly determined by the transmission distance, where the extra DoF gain can achieves more than 30\% improvement. This indicates that additional DoF gain is not mainly driven by transmit power consumption, but the EM nature of near-field communication.

\begin{figure}[htbp]
\centering
\includegraphics[width=0.35\textwidth]{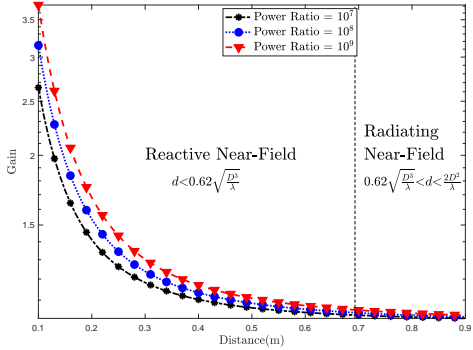}
\caption{Improved DoF percentage as a function of transmission distance.}
\label{distance}
\end{figure}


\begin{figure}[htbp]
\centering
\includegraphics[width=0.33\textwidth]{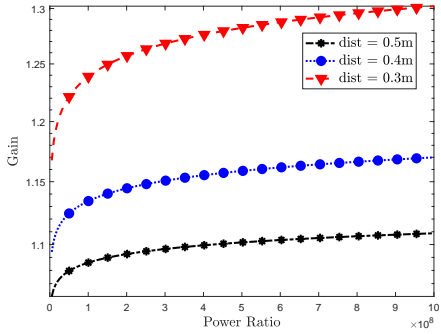}
\caption{Improved DoF percentage as a function of the ratio between transmission power and noise power.}
\label{powerratio}
\end{figure}

Next, we set simulation parameters as Table.~\ref{parameter_set_2}, then the channel capacity improvement of near-field communications is computed through (\ref{traditional_capacity}) and (\ref{capacity_decomposition}).

\begin{table}[htbp]
\caption{CAPACITY SIMULATION PARAMETERS}
\begin{center}
\begin{tabular}{|c|c|}
\hline
\multicolumn{2}{|c|}{\textbf{Simulation Parameters}} \\
\hline
\textbf{Antenna linear dimension D} & \textbf{Operation frequency} \\
\hline
$D=10\lambda$ & $f=300M/900M/3G$Hz\\
\hline
\multicolumn{2}{|c|}{\textbf{Transmission Distance}} \\
\hline
\multicolumn{2}{|c|} {$z=0.01 - 1$m $^{\mathrm{*}}$} \\
\hline
\multicolumn{2}{l}{The parameters with$^{\mathrm{*}}$ will be changed in the simulation stage}
\end{tabular}
\label{tab1}
\end{center}
\label{parameter_set_2}
\end{table}

In Fig.~\ref{capacity}, it is shown that as the transmission distance becomes larger, the additional channel capacity provided by evanescent waves decreases rapidly. Moreover, with the increase of the operating frequency, the extra capacity goes down faster.  When the receiver goes to the corresponding far-field region of the setting, the DoF improvement can be neglected, which further validates that extra DoF benefit can only be obtained in the near-field region.

\begin{figure}[htbp]
\centering
\includegraphics[width=0.35\textwidth]{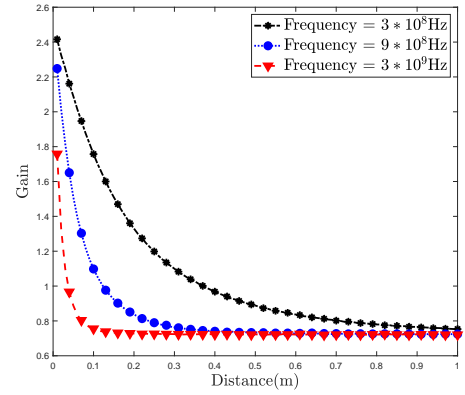}
\caption{Improved capacity percentage is a function of transmission distance.}
\label{capacity}
\end{figure}

%

\section{CONCLUSION}
Based on the Fourier plane-wave series representation of a spatially-stationary scattering channel, we derived a theoretical DoF and capacity gain in the near-field region by considering evanescent waves. This theoretical result captures the essence of EM propagation and also illuminates the possible benefits brought by evanescent waves.The analysis is limited to an isotropic scattering environment but can be extended to the non-isotropic case through the linear-system theoretic interpretation of plane-wave propagation. Numerical simulations demonstrates the validity of the derivations and its coherence with conventional DoF analysis in the far-field region.

\end{document}